\renewcommand{\emptyset}{\varnothing}	
\renewcommand{\phi}{\varphi}    	
\newcommand{\ebox}{\quad{\vrule height4pt width4pt depth0pt}}
\newcommand{\itename}{\fn{ite}\xspace}
\newcommand{\mkndname}{\fn{mknd}\xspace}
\newcommand{\bddorname}{\fn{or}\xspace}
\newcommand{\randbddname}{\fn{gen\_rand\_bdd}\xspace}
\newcommand{\makerandbddname}{\fn{rand\_bdd}\xspace}
\newcommand{\randsinkname}{\fn{rand\_sink}\xspace}
\newcommand{\cointossname}{\fn{cointoss}\xspace}
\newcommand{\toaffname}{\fn{to\_aff}\xspace}
\newcommand{\getmodelname}{\fn{get\_model}\xspace}
\newcommand{\xorclosename}{\fn{xor\_close}\xspace}
\newcommand{\translatename}{\fn{translate}\xspace}
\newcommand{\travname}{\fn{trav}\xspace}
\newcommand{\extendname}{\fn{extend}\xspace}
\newcommand{\consname}{\fn{cons}\xspace}
\newcommand{\ite}[3]{\ensuremath{\itename(#1,#2,#3)}}
\newcommand{\mknd}[3]{\ensuremath{\mkndname(#1,#2,#3)}}
\newcommand{\randbdd}[2]{\ensuremath{\randbddname(#1,#2)}}
\newcommand{\toaff}[1]{\ensuremath{\toaffname(#1)}}
\newcommand{\getmodel}[1]{\ensuremath{\getmodelname(#1)}}
\newcommand{\xorclose}[1]{\ensuremath{\xorclosename(#1)}}
\newcommand{\translate}[2]{\ensuremath{\translatename(#1,#2)}}
\newcommand{\trav}[3]{\ensuremath{\travname(#1,#2,#3)}}
\newcommand{\extend}[3]{\ensuremath{\extendname(#1,#2,#3)}}
\newcommand{\cons}[4]{\ensuremath{\consname(#1,#2,#3,#4)}}
\newcommand{\makerandbdd}[3]{\ensuremath{\makerandbddname(#1,#2,#3)}}
\newcommand{\randsink}[0]{\ensuremath{\randsinkname}}
\newcommand{\cointoss}[0]{\ensuremath{\cointossname()}}
\newcommand{\bddtrue}{\ensuremath{1}\xspace}
\newcommand{\bddfalse}{\ensuremath{0}\xspace}
\newcommand{\true}[0]{\id{1}\xspace}
\newcommand{\false}[0]{\id{0}\xspace}
\newcommand{\bxor}{\ensuremath{\oplus}}
\newcommand{\B}{{\ensuremath{\cal B}}\xspace}
\newcommand{\Var}{{\ensuremath{\cal V}}\xspace}
\newcommand{\Valuations}{{\ensuremath{\cal I}}\xspace}
\newcommand{\PartialValuations}{{\ensuremath{{\cal I}_\mathsf{p}}}\xspace}
\newcommand{\vars}[1]{\ensuremath{\mathsf{vars}(#1)}}
\newcommand{\bits}[1]{\ensuremath{\mathsf{bits}(#1)}}
\newcommand{\aff}[1]{\ensuremath{\mathsf{aff}(#1)}}
\newcommand{\Bool}{\classname{B}\xspace}
\newcommand{\zeroval}{\vec{0}}
\newcommand{\bddor}[2]{\ensuremath{\bddorname(#1,#2)}}
\renewcommand{\lnot}[1]{\ensuremath{\overline{#1}}}
\newcommand{\doubleb}[1]{[\![ #1 ]\!]}
\newcommand{\id}[1]{\textit{#1}}
\newcommand{\fn}[1]{\textsf{#1}}
\newcommand{\classname}[1]{\textbf{#1}}
\newcommand{\defined}[1]{\textit{#1}}
\newcommand{\etal}{\textit{et al.}}
\title{Binary Decision Diagrams \\ for Affine Approximation}
\author{Kevin Henshall, Peter Schachte, Harald S{\o}ndergaard and Leigh Whiting}
\institute{Department of Computer Science and Software Engineering \\ The University of Melbourne, Vic.\ 3010, Australia}
\date{}
\begin{document}

\maketitle
\begin{abstract}
Selman and Kautz's work on ``knowledge compilation'' 
established how approximation (strengthening and/or
weakening) of a propositional knowledge-base can be used to
speed up query processing, at the expense of completeness.
In this classical approach, querying uses Horn over-
and under-approximations of a given knowledge-base,
which is represented as a propositional formula in
conjunctive normal form (CNF).  Along with the class of
Horn functions, one could imagine other Boolean function
classes that might serve the same purpose, owing to
attractive deduction-computational properties similar
to those of the Horn functions.  Indeed, Zanuttini has
suggested that the class of affine Boolean functions could
be useful in knowledge compilation and has presented an
affine approximation algorithm.  Since CNF is awkward
for presenting affine functions, Zanuttini considers both a
sets-of-models representation and the use of modulo 2
congruence equations.
In this paper, we propose an algorithm
based on reduced ordered binary decision diagrams (ROBDDs).
This leads to a representation which is more compact than
the sets of models and, once we
have established some useful properties of affine Boolean
functions, a more efficient algorithm.
\end{abstract}

\section{Introduction}
\label{sec:intro}

A recurrent theme in artificial intelligence is the efficient
use of (propositional) knowledge-bases.
A promising approach, which was
initially proposed by Selman and Kautz~\cite{Sel-Kau:JACM96},
is to query (and perform deductions from)
upper and lower approximations, commonly called \emph{envelopes} and
\emph{cores} respectively, of a given knowledge-base.
By choosing approximations that allow more efficient inference, it
is often possible to quickly determine that the envelope of the 
given knowledge-base entails the query, and therefore so does the full
knowledge-base, avoiding the costly inference from
the full knowledge-base.  When this fails, it may be possible to quickly show
that the query is not entailed by the core, and therefore 
not entailed by the full knowledge-base.  Only when both of these fail must
the full knowledge-base be used for inference.

It is usually assumed that Boolean functions are
represented in clausal form, and that approximations are
Horn~\cite{Sel-Kau:JACM96,delVal:AI2005}, as inference from Horn
knowledge-bases is exponentially more efficient than from unrestricted
knowledge-bases.
However, it has been noted that there are other well-understood
classes that have computational properties that include some of
the attractive properties of the Horn class. 

Zanuttini~\cite{Zan:ECAI02,Zan:SARA02} discusses
the use of other classes of Boolean functions for approximation
and points out that \emph{affine} approximations have
certain advantages over Horn approximations, most notably the
fact that they do not blow out in size.
This is certainly the case when affine functions are represented 
in the form of modulo-2 congruence equations.
The more general sets-of-models representation is also considered
by Zanuttini.
In this paper, we consider another general representation, namely the
well-known \emph{Reduced Ordered Binary Decision Diagrams (ROBDDs)}.
We prove some important properties of affine
functions represented as ROBDDs,
and present a new ROBDD algorithm for deriving affine envelopes.  

The balance of the paper proceeds as follows.
In Section~\ref{sec:background} we 
recapitulate the definition of the Boolean affine class, and we establish
some of their important properties.
We also briefly introduce ROBDDs, but mainly to fix our notation,
as we assume that the reader is familiar with Boolean
functions and their representation as decision diagrams.
Section~\ref{sec:algorithms} recalls the model-based affine envelope algorithm,
and develops our own ROBDD-based algorithm, along with a correctness
proof.
Section~\ref{sec:eval} describes our testing methodology, including our
algorithm for generating random ROBDDs, and presents our results.
Section~\ref{sec:conclusion} discusses related work and applications,
and concludes.

\section{Propositional Classes, Approximation and ROBDDs}
\label{sec:background}

We use ROBDDs~\cite{bra-rud-dac-90,bry-toc-86} to represent
Boolean functions.
Horiyama and Ibaraki~\cite{Hor-Iba:AI2003} have recommended ROBDDs
as suitable for implementing knowledge bases.
Our choice of ROBDDs as a data structure was not so
much influenced by that recommendation, as by the convenience
of working with a canonical representation for Boolean functions,
and one that lends itself to inductive reasoning and recursive
problem solving.
Additionally, ROBDD-based inference is fast, and in particular, checking
whether a valuation is a model of an $n$-place function given by an ROBDD
requires a path traversal of length no more than $n$.

ROBDD algorithms for approximation are of interest in their own 
right and some find applications in dataflow 
analysis~\cite{Sch-Son:VMCAI06}.
From this aspect, this paper continues earlier work by
Schachte and S{\o}ndergaard~\cite{Sch-Son:VMCAI06,Sch-Son:SARA07}
who gave algorithms for finding monotone, Krom, and Horn
envelopes.
Here we introduce an ROBDD algorithm for affine envelopes, which
is new.

\subsection{Boolean functions}
\label{sec:prelim}

Let $\B = \{\false,\true\}$ and let $\Var$ be a denumerable
set of variables.
A \emph{valuation} $\mu:\Var\rightarrow \B$ is a (total) assignment of truth
values to the variables in $\Var$.
Let $\Valuations = \Var\rightarrow \B$ denote the set of
$\Var$-valuations.
A \emph{partial valuation} $\mu:\Var\rightarrow \B \cup \{\bot\}$ assigns
truth values to some variables in $\Var$, and $\bot$ to others.  Let
$\PartialValuations = \Var\rightarrow \B \cup \{\bot\}$.
We use the notation $\mu[x \mapsto i]$, where $x\in \Var$ and $i\in \B$,
to denote the valuation $\mu$ updated to map $x$ to $i$, that is,
\[
  \mu[x \mapsto i](v) = \left\{
        \begin{array}{ll}
        i & \mbox{\quad if $v = x$} \\
        \mu(v) & \mbox{\quad otherwise}
        \end{array}
  \right.
\]
A Boolean function over ${\Var}$ is a
function $\varphi: \Valuations \to \B$.
We let $\Bool$ denote the set of all Boolean functions over $\Var$.
The ordering on $\B$ is the usual: $x \leq y$ iff $x = \false \lor y = \true$.
\Bool is ordered pointwise, so that the ordering relation corresponds exactly
to classical entailment, $\models$.
It is convenient to overload the symbols for truth and falsehood.
Thus we let $\true$ denote the largest element of \Bool (that is,
$\lambda \mu.\true$) as well as of $\B$.
Similarly $\false$ denotes the smallest element of \Bool
(that is, $\lambda \mu.\false$) as well as of $\B$.
A valuation $\mu$ is a
\emph{model} for $\varphi$, denoted $\mu \models \varphi$, if
$\varphi(\mu)=\true$.
We let $\id{models}(\varphi)$ denote the set of models of $\varphi$.
Conversely, the unique Boolean function that has exactly the set
$M$ as models is denoted $\id{fn}(M)$.
A Boolean function $\phi$ is said to be \emph{independent of} a variable $x$
when for all valuations $\mu$, $\mu[x\mapsto 0] \models \phi$ iff
$\mu[x\mapsto 1] \models \phi$.

In the context of an ordered set of $k$ variables of interest,
$x_1, \ldots, x_k$, we may identify with $\mu$ the binary
sequence $\bits{\mu}$ of length $k$:
\[
  \mu(x_1), \ldots, \mu(x_k)
\]
which we will write simply as a bit-string of length $k$.
Similarly we may think of, and write, the set of valuations
$M$ as a set of bit-strings:
\[
  \bits{M} = \{\bits{\mu} \mid \mu \in M \}
\]
As it hardly creates confusion, we shall present
valuations variously as functions or bitstrings.
We denote the \emph{zero valuation}, which maps
$x_i$ to $\false$ for all $1 \leq i \leq k$, by $\zeroval$.

We use the Boolean connectives $\neg$ (negation),
$\land$ (conjunction), $\lor$ (disjunction) and $+$ 
(exclusive or, or ``xor'').
These connectives operate on Boolean functions, that is,
on elements of $\Bool$.
Traditionally they are overloaded to also operate on
truth values, that is, elements of $\B$.
However, we deviate at this point, as the distinction
between xor and its ``bit-wise'' analogue will be
critical in what follows.
Hence we denote the $\B$ (bit) version by $\bxor$.
We extend this to valuations and bit-strings in the natural way:
\[
  (\mu_1 \bxor \mu_2) (x) = \mu_1(x) \bxor \mu_2(x)
\]
and we let ${}\bxor_3$ denote the ``xor of three'' operation
$\lambda \mu_1 \mu_2 \mu_3 . \mu_1 \bxor \mu_2 \bxor \mu_3$.
We follow Zanuttini~\cite{Zan:ECAI02} in further overloading 
`$\bxor$' and using the notation
\[
  M_\mu = \mu \bxor M = \{\mu \bxor \mu' \mid \mu' \in M\}
\]
We read $M_\mu$ as ``$M$ translated by $\mu$''.
Note that for any set $M$, the function $\lambda \mu . M_\mu$
is an involution: $(M_\mu)_\mu = M$.

A final overloading results in the following definition.
For $\varphi \in \Bool$, and $\mu \in \Valuations$, let
$
  \varphi \bxor \mu = \id{fn}(M_\mu)
$
where $M = \id{models}(\varphi)$.

\subsection{The affine class}
\label{sec:classes}

An \emph{affine} function is one whose set of models is closed under
pointwise application of $\bxor_3$~\cite{Schaefer1978:complexity}.
Affine functions have a number of attractive properties, as we
shall see.
Syntactically, a Boolean function is affine
iff it can be written as a conjunction of affine equations
\[
  c_1 x_1 + c_2 x_2 + \ldots + c_k x_k = c_0
\]
where $c_i \in \{0,1\}$ for all $i \in \{0,..,k\}$.\footnote{In
some circles, such as
cryptography/coding community, the term ``affine'' is used only 
for a function that is 0 or 1, or can be written 
$
  c_1 x_1 + c_2 x_2 + \ldots + c_k x_k + c_0
$
(the latter is what Post~\cite{Post:func-compl41} 
called an ``alternating'' function).
The resulting set of ``affine'' functions is not closed under conjunction.
}
This is well known, but for completeness we prove it below.

The affine class contains $\true$ and is closed under conjunction.
Hence the concept of a (unique) affine envelope is well defined,
and the operation of taking the affine envelope is an
upper closure operator~\cite{Sch-Son:VMCAI06}.
For convenience, let us introduce a name for this operator:
\begin{definition} \rm
Let $\varphi$ be a Boolean function.
The \emph{affine envelope}, $\id{aff}(\varphi)$, of $\varphi$
is defined:
\[
  \id{aff}(\varphi) = 
    \bigwedge \{\psi \mid \mbox{$\varphi \models \psi$ and $\psi$ is affine}\}
  \ebox
\]
\end{definition}
There are numerous other classes of interest, including 
isotone, antitone, Krom, Horn,
$k$-Horn~\cite{Dec-Pea:AI92}, and $k$-quasi-Horn functions,
for which the concept of an envelope is well-defined, as they
form upper closure operators~\cite{Sch-Son:SARA07}.\footnote{Popular classes
such as \emph{unate} functions and \emph{renamable Horn} are not
closed under conjunction and therefore do not have well-defined
concepts of (unique) envelopes.
For example, $x \rightarrow y$ and
$x \leftarrow y$ both are unate, while $x \leftrightarrow y$ is not,
so the ``unate envelope'' of the latter is not well-defined.}

Zanuttini~\cite{Zan:ECAI02} exploits the close connection between 
vector spaces and the sets of models of affine functions.
For our purposes we call a set $S$ of bistrings a \emph{vector space}
iff $\zeroval \in S$ and $S$ is closed under $\bxor$.
The next proposition
simplifies the task of doing model-closure under $\bxor_3$.

\begin{proposition}[\cite{Zan:ECAI02}] \rm
\label{prop:translate}
A non-empty set of models $M$ is closed under $\bxor_3$ iff
$M_\mu$ is a vector space,
where $\mu$ is any element of $M$.
\end{proposition}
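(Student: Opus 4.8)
The plan is to prove the two implications separately, in each case unfolding the definition $M_\mu = \{\mu \bxor \mu' \mid \mu' \in M\}$ and relying on two elementary facts about bit-wise xor: it is associative, and it satisfies the self-inverse law $\nu \bxor \nu = \zeroval$ for every bit-string $\nu$. Non-emptiness of $M$ will be used exactly once, to put a witness $\mu$ on the table; everything else is bookkeeping with translations. I would also record, as a byproduct of the two directions, that ``$M_\mu$ is a vector space for \emph{some} $\mu \in M$'' and ``$M_\mu$ is a vector space for \emph{every} $\mu \in M$'' are equivalent, which is what licenses the phrase ``$\mu$ is any element of $M$''.

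For the ``only if'' direction, I would assume $M$ is closed under $\bxor_3$, fix an arbitrary $\mu \in M$, and check the two closure conditions for $M_\mu$. First, $\zeroval = \mu \bxor \mu \in M_\mu$, since $\mu \in M$. Second, given $\nu_1, \nu_2 \in M_\mu$, write $\nu_i = \mu \bxor \mu_i$ with $\mu_i \in M$; then $\nu_1 \bxor \nu_2 = \mu_1 \bxor \mu_2 = \mu \bxor (\mu \bxor \mu_1 \bxor \mu_2)$, and $\mu \bxor \mu_1 \bxor \mu_2 \in M$ because $M$ is closed under $\bxor_3$, so $\nu_1 \bxor \nu_2 \in M_\mu$. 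Hence $M_\mu$ is a vector space, and since $\mu$ was arbitrary this holds for every element of $M$.

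For the ``if'' direction, I would suppose $M_\mu$ is a vector space for some $\mu \in M$, and take arbitrary $\mu_1, \mu_2, \mu_3 \in M$. Each $\mu \bxor \mu_i$ lies in $M_\mu$, so two applications of closure under $\bxor$ give $(\mu \bxor \mu_1) \bxor (\mu \bxor \mu_2) \bxor (\mu \bxor \mu_3) \in M_\mu$; using associativity and the self-inverse law this element is just $\mu \bxor (\mu_1 \bxor \mu_2 \bxor \mu_3)$. Translating back by $\mu$ and invoking the stated involution property $(M_\mu)_\mu = M$, I conclude $\mu_1 \bxor \mu_2 \bxor \mu_3 \in M$, so $M$ is closed under $\bxor_3$.

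I do not expect a genuine obstacle here; the proposition is essentially the observation that ``closed under $\bxor_3$'' is the affine-space version of ``closed under $\bxor$ together with containing $\zeroval$'', transported across the translation bijection. The only points requiring care are notational: keeping the bit-wise operation $\bxor$ distinct from the connective $+$ while manipulating the expressions, and noticing that it is precisely the hypothesis $\mu \in M$ (hence $M \neq \emptyset$) that forces $\zeroval$ into $M_\mu$ in the first direction.
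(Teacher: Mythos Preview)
Your proof is correct and follows essentially the same route as the paper's: both directions unfold $M_\mu$ and use associativity and the self-inverse law for $\bxor$, with the paper's ``if'' direction applying closure in two steps (first to $\mu\bxor\mu_2$ and $\mu\bxor\mu_3$, then to the result and $\mu\bxor\mu_1$) just as you do, and the ``only if'' direction applying $\bxor_3$-closure to $\mu$, $\mu\bxor\nu_1$, $\mu\bxor\nu_2$ exactly as you describe. Your added remark that the argument simultaneously establishes the ``some $\mu$'' versus ``every $\mu$'' equivalence is a nice clarification that the paper leaves implicit.
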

\begin{proof}
Let $\mu$ be an arbitrary element of $M$.
Clearly $M_\mu$ contains $\zeroval$, so the right-hand side of
the claim amounts to $M_\mu$ being closed under $\bxor$.

For the `if' direction, assume $M_\mu$ is closed under $\bxor$
and consider $\mu_1, \mu_2, \mu_3 \in M$.
Since $\mu \bxor \mu_2$ and $\mu \bxor \mu_3$ are in $M_\mu$, so is
$\mu_2 \bxor \mu_3$.
And since furthermore $\mu \bxor \mu_1$ is in $M_\mu$, so is
$\mu \bxor \mu_1 \bxor \mu_2 \bxor \mu_3$.
Hence $\mu_1 \bxor \mu_2 \bxor \mu_3$ is in $M$.

For the `only if' direction, assume $M$ is closed under $\bxor_3$,
and consider $\mu_1, \mu_2 \in M_\mu$.
All of $\mu, \mu \bxor \mu_1$ and $\mu \bxor \mu_2$ are in $M$,
and so $\mu \bxor (\mu \bxor \mu_1) \bxor (\mu \bxor \mu_2) = 
\mu \bxor \mu_1 \bxor \mu_2 \in M$.
Hence $\mu_1 \bxor \mu_2 \in M_\mu$.
\ebox
\end{proof}
\begin{proposition}
\rm
A Boolean function is affine iff it can be written as
a conjunction of equations
\[
  c_1 x_1 + c_2 x_2 + \ldots + c_k x_k = c_0
\]
where $c_i \in \{0,1\}$ for all $i \in \{0,..,k\}$.
\end{proposition}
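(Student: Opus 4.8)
The plan is to prove both implications, using Proposition~\ref{prop:translate} together with elementary linear algebra over the two-element field $\B$ (with $+$ as field addition and $\bxor$ as componentwise addition on $\B^k$). As the right-hand side of the claim refers to finitely many variables, I work within a fixed ordered set $x_1,\dots,x_k$ of variables of interest, identifying valuations with vectors in $\B^k$, exactly as set up in Section~\ref{sec:prelim}.

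For the ``if'' direction, suppose $\varphi$ is the conjunction of equations $e_1,\dots,e_m$, where $e_j$ is $c_{j,1}x_1 + \cdots + c_{j,k}x_k = c_{j,0}$. Since $\id{models}(\varphi) = \bigcap_j \id{models}(e_j)$, and an intersection of $\bxor_3$-closed sets is again $\bxor_3$-closed (the empty set being vacuously closed), it suffices to show $\id{models}(e_j)$ is closed under $\bxor_3$ for a single $j$. The left-hand side of $e_j$ is the $\B$-linear form $\ell_j(\mu) = \sum_i c_{j,i}\,\mu(x_i)$, so linearity gives $\ell_j(\mu_1 \bxor \mu_2 \bxor \mu_3) = \ell_j(\mu_1)\bxor \ell_j(\mu_2)\bxor \ell_j(\mu_3)$, which equals $c_{j,0}\bxor c_{j,0}\bxor c_{j,0} = c_{j,0}$ whenever $\mu_1,\mu_2,\mu_3$ are all models of $e_j$. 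Hence $\mu_1\bxor\mu_2\bxor\mu_3$ is again a model, and $\varphi$ is affine. The degenerate functions are covered as well: $\true$ is the empty conjunction (or $0 = 0$), and $\false$ is $0 = 1$.

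For the ``only if'' direction, suppose $\varphi$ is affine and let $M = \id{models}(\varphi)$. If $M = \emptyset$ then $\varphi = \false$, which is the single equation $0 = 1$, so assume $M$ is non-empty and fix some $\mu \in M$. By Proposition~\ref{prop:translate}, $M_\mu$ is a vector space, i.e.\ a linear subspace of $\B^k$. By the standard fact that every subspace of $\B^k$ is the solution set of a homogeneous linear system, there is a matrix $A$ over $\B$ with $M_\mu = \{v \mid Av = \zeroval\}$. Since $\lambda\nu.\,M_\nu$ is an involution, $M = (M_\mu)_\mu = \{v \mid A(v \bxor \mu) = \zeroval\} = \{v \mid Av = A\mu\}$, and reading this system row by row gives a conjunction of equations $c_1x_1 + \cdots + c_kx_k = c_0$ (with $c_i$ the entries of a row of $A$ and $c_0$ the corresponding entry of $A\mu$) whose model set is exactly $M$. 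Since $\varphi$ is determined by $M$ on the variables of interest, $\varphi$ equals this conjunction.

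The only genuinely non-bookkeeping step is the linear-algebra lemma invoked in the second direction: every linear subspace $U \subseteq \B^k$ is $\ker A$ for some matrix $A$ over $\B$. This is routine — take the rows of $A$ to be a basis of the orthogonal complement $U^\perp$ with respect to the standard bilinear form, using that $\dim U + \dim U^\perp = k$ and $U^{\perp\perp} = U$ — but I would state it explicitly, as it is precisely the point at which the ``system of equations'' shape of affine functions is produced. The remaining care is purely in handling the reduction to finitely many variables and the degenerate cases $\true$ and $\false$, as indicated above.
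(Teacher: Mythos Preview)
Your proof is correct and follows essentially the same route as the paper: both directions use linearity of the left-hand side for the ``if'' part, and Proposition~\ref{prop:translate} plus elementary linear algebra (subspace $=$ solution set of a homogeneous system, then translate back by $\mu$) for the ``only if'' part. The only cosmetic difference is that where the paper forms a basis and cites Zanuttini for the passage to equations, you invoke the orthogonal-complement characterisation of the kernel directly; you also handle the degenerate cases $\true$ and $\false$ explicitly, which the paper omits.
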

\begin{proof}
Assume the Boolean function $\varphi$ is given as a conjunction of 
equations of the indicated form and let $\mu_1$, $\mu_2$ and $\mu_3$ 
be models.
That is, for each equation we have
\[
\begin{array}{l}
  c_1 \mu_1(x_1) + c_2 \mu_1(x_2) + \ldots + c_k \mu_1(x_k) = c_0 \\
  c_1 \mu_2(x_1) + c_2 \mu_2(x_2) + \ldots + c_k \mu_2(x_k) = c_0 \\
  c_1 \mu_3(x_1) + c_2 \mu_3(x_2) + \ldots + c_k \mu_3(x_k) = c_0
\end{array}
\]
Adding left-hand sides and adding right-hand sides, making use
of the fact that `$\cdot$' distributes over `$+$', we get
\[
  c_1 \mu(x_1) + c_2 \mu(x_2) + \ldots + c_k \mu(x_k) = c_0 + c_0 + c_0 = c_0
\]
where $\mu = \mu_1 \bxor \mu_2 \bxor \mu_3$.
As $\mu$ thus satisfies each equation, $\mu$ is a model of $\varphi$.
This establishes the `if' direction.

For the `only if' part, note that
by Proposition~\ref{prop:translate}, we obtain a vector space
$M_\mu$ from any non-empty set $M$ closed under $\bxor_3$ 
by translating each element of $M$ by $\mu \in M$.
A basis for $M_\mu$ can be formed by taking one vector at a time from $M_\mu$ 
and adding it to the basis if it is linearly independent of the existing basis 
vectors.  
From this basis, a set of linear equations 
\[
  \begin{array}{ccccccc}
    a_{11} x_1 & \bxor & \cdots & \bxor & a_{1k} x_k & = & 0
 \\ a_{21} x_1 & \bxor & \cdots & \bxor & a_{2k} x_k & = & 0
 \\ \vdots     &       &        &       &            & \vdots &
 \\ a_{j1} x_1 & \bxor & \cdots & \bxor & a_{jk} x_k & = & 0
  \end{array}
\]
can be computed that have exactly $M_\mu$ as their set of models
(a method is provided by Zanuttini~\cite{Zan:ECAI02}, 
in the proof of his Proposition 3).
Each function 
$f_i = \lambda x_1,\ldots,x_k.a_{i1} x_1 \bxor \cdots \bxor a_{ik} x_k$
is linear, so for $\nu \in M_\mu$, 
$f_i(\nu \bxor \mu) = f_i(\nu) + f_i(\mu) = f_i(\mu)$.
Hence $M$ can be described by the set of affine equations
\[
  \begin{array}{ccccccc}
    a_{11} x_1 & \bxor & \cdots & \bxor & a_{1k} x_k & = & f_1(\mu)
 \\ a_{21} x_1 & \bxor & \cdots & \bxor & a_{2k} x_k & = & f_2(\mu)
 \\ \vdots     &       &        &       &            & \vdots &
 \\ a_{j1} x_1 & \bxor & \cdots & \bxor & a_{jk} x_k & = & f_j(\mu)
  \end{array}
\]
as desired.
\ebox
\end{proof}
It follows from the syntactic characterisation that
the number of models possessed by an affine function is 
either 0 or a power of 2.
Other properties will now be established that are used in the 
justification of the affine envelope algorithm of Section~\ref{sec:algorithms}.
The first property is that if a Boolean function $\varphi$ has two
models that differ for exactly one variable $v$, then its affine
envelope will be independent of $v$.
To state this precisely we introduce a concept of a
``characteristic'' valuation for a variable.
\begin{definition} \rm
In the context of a set of variables $V$, let $v \in V$.
The \emph{characteristic valuation} for $v$, $\chi_v$, is defined by
\[
  \chi_v(x) = \left\{ \begin{array}{ll}
                        \true & \mbox{ if $x = v$} \\
                        \false & \mbox{ otherwise}  \ebox
                      \end{array}
              \right.
\]
\end{definition}
Note that $\mu \bxor \chi_v$ is the valuation which agrees with
$\mu$ for all variables except $v$.
Moreover, if $\mu \models \varphi$, then both of $\mu$ and
$\mu \bxor \chi_v$ are models of $\exists v (\varphi)$.

\begin{proposition} \rm
\label{prop:vs}
Let $\varphi$ be a Boolean function whose set of models forms
a vector space, and assume that for some valuation $\mu$ and some variable $v$,
$\mu$ and $\mu \bxor \chi_v$ both satisfy $\varphi$.
Then $\varphi$ is independent of $v$.
\end{proposition}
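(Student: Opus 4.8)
The plan is to argue directly with the set of models $M = \id{models}(\varphi)$, using that it is a \emph{vector space} in the sense fixed earlier, namely $\zeroval \in M$ and $M$ is closed under $\bxor$. The first step is to note that the characteristic valuation $\chi_v$ itself belongs to $M$: by hypothesis both $\mu$ and $\mu \bxor \chi_v$ are models, so closure under $\bxor$ yields $\mu \bxor (\mu \bxor \chi_v) = \chi_v \in M$.

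Next I would reformulate the conclusion so that it becomes a statement purely about $M$ and $\chi_v$. To say that $\varphi$ is independent of $v$ means that for every valuation $\nu$ we have $\nu[v\mapsto 0] \models \varphi$ iff $\nu[v\mapsto 1] \models \varphi$. Since $\nu[v\mapsto 1] = \nu[v\mapsto 0] \bxor \chi_v$ (this is exactly the remark preceding the proposition), and since $\nu[v\mapsto 0]$ ranges over all valuations mapping $v$ to $\false$ while $\rho \mapsto \rho \bxor \chi_v$ is an involution, independence of $v$ is equivalent to the single symmetry condition: for every valuation $\rho$, $\rho \in M$ iff $\rho \bxor \chi_v \in M$.

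The last step is then immediate from $\chi_v \in M$ together with closure under $\bxor$: if $\rho \in M$ then $\rho \bxor \chi_v \in M$, and conversely if $\rho \bxor \chi_v \in M$ then, applying the involution, $\rho = (\rho \bxor \chi_v) \bxor \chi_v \in M$. This establishes the symmetry condition, and hence $\varphi$ is independent of $v$.

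There is no genuinely hard part here; the only point requiring a little care is the bookkeeping in the second step, i.e.\ checking that "$\nu[v\mapsto i]$ for $i \in \B$" unfolds correctly to the pair $\{\rho,\, \rho \bxor \chi_v\}$ of valuations differing exactly at $v$, so that the notion of independence is faithfully captured by the $\bxor\chi_v$-symmetry of $M$. Once that translation is in place, the vector-space closure does all the work.
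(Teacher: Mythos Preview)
Your proof is correct and follows essentially the same approach as the paper: first derive $\chi_v \in M$ from closure under $\bxor$ applied to $\mu$ and $\mu \bxor \chi_v$, then use closure again to conclude that $M$ is invariant under $\bxor\,\chi_v$, which is exactly independence of $v$. The paper's version is terser (it omits the explicit unpacking of the definition of independence that you carry out in your second step), but the idea is identical.
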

\begin{proof}
The set $M$ of models contains at least two elements, and since it is closed 
under $\bxor$, $\chi_v$ is a model.
Hence for \emph{every} model $\nu$ of $\varphi$, $\nu \bxor \chi_v$ is
another model.
It follows that $\varphi$ is independent of $v$.
\ebox
\end{proof}

\begin{proposition} \rm
\label{prop:kill-var}
Let $\varphi$ be a Boolean function.
If, for some valuation $\mu$ and some variable $v$,
$\mu$ and $\mu \bxor \chi_v$ both satisfy $\varphi$,
then $\id{aff}(\varphi) = \exists v (\id{aff}(\varphi))$.
\end{proposition}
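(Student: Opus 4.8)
The plan is to show that the hypothesis forces the model set of $\id{aff}(\varphi)$ to be closed under flipping the value of $v$, which is exactly independence of $v$, and then to note that independence of $v$ is equivalent to $\id{aff}(\varphi) = \exists v (\id{aff}(\varphi))$. First I would record that, since taking the affine envelope is an upper closure operator, $\varphi \models \id{aff}(\varphi)$; hence the two given models $\mu$ and $\mu \bxor \chi_v$ of $\varphi$ are also models of $\id{aff}(\varphi)$. Write $M' = \id{models}(\id{aff}(\varphi))$; it is non-empty (it contains $\mu$) and, because $\id{aff}(\varphi)$ is affine, closed under $\bxor_3$.

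Next I would argue that $M'$ is closed under adding $\chi_v$. Given any $\nu \in M'$, apply $\bxor_3$-closure to the triple $\nu, \mu, \mu \bxor \chi_v \in M'$: since $\bxor$ is associative and self-inverse, $\nu \bxor \mu \bxor (\mu \bxor \chi_v) = \nu \bxor \chi_v$, so $\nu \bxor \chi_v \in M'$. As $\nu \bxor \chi_v$ agrees with $\nu$ on every variable other than $v$, this says precisely that $M'$ is invariant under toggling $v$, i.e. that $\id{aff}(\varphi)$ is independent of $v$. Alternatively, and more in keeping with the preceding lemmas, one could first invoke Proposition~\ref{prop:translate} to pass to the vector space $M'_\mu$ (which then contains $\zeroval$ and $\chi_v = \mu \bxor (\mu \bxor \chi_v)$), apply Proposition~\ref{prop:vs} to conclude that $\id{fn}(M'_\mu)$ is independent of $v$, and finally transfer independence back along the translation by $\mu$.

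The last step is to connect independence of $v$ with the existential form: a function $\psi$ is independent of $v$ iff $\psi = \exists v(\psi)$, since existential quantification only adds the models obtained by toggling $v$, and when $\psi$ is independent of $v$ it already has all of those. Applying this with $\psi = \id{aff}(\varphi)$ yields the claim.

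I do not expect a serious obstacle here; the only points needing care are (i) being explicit that $\id{aff}(\varphi)$, being affine, has a $\bxor_3$-closed model set even though that set need not be a vector space (which is why one either argues directly as above or first translates via Proposition~\ref{prop:translate}), and (ii) the routine but easily fumbled bookkeeping that independence is preserved under translation, should one take the second route.
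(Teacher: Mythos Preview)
Your proposal is correct and follows essentially the same route as the paper: use that $\mu$ and $\mu \bxor \chi_v$ are models of $\id{aff}(\varphi)$, apply $\bxor_3$-closure of the affine envelope's model set to the triple $\nu,\mu,\mu\bxor\chi_v$ to get $\nu\bxor\chi_v$ as a model for every model $\nu$, and conclude that existential quantification over $v$ adds nothing. If anything, your version is slightly more careful in quantifying $\nu$ over models of $\id{aff}(\varphi)$ rather than of $\varphi$, which is what the argument actually requires.
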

\begin{proof}
Let $\mu$ be a model of $\varphi$, with $\mu \bxor \chi_v$
also a model.
For \emph{every} model $\nu$ of $\varphi$, we have that
$\nu \bxor \mu \bxor (\mu \bxor \chi_v)$ satisfies $\id{aff}(\varphi)$,
that is, $\nu \bxor \chi_v \models \id{aff}(\varphi)$.
Now since both $\nu$ and $\nu \bxor \chi_v$ satisfy $\id{aff}(\varphi)$,
it follows that $\exists v (\id{aff}(\varphi))$ cannot have a model
that is not already a model of $\id{aff}(\varphi)$
(and the converse holds trivially).
Hence $\id{aff}(\varphi) = \exists v (\id{aff}(\varphi))$.
\ebox
\end{proof}

\begin{proposition} \rm
\label{prop:aff-exists-commute}
For all Boolean functions $\varphi$,
$\id{aff}(\exists v (\varphi)) = \exists v (\id{aff}(\varphi))$.
\end{proposition}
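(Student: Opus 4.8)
The plan is to prove the two inclusions $\id{aff}(\exists v(\varphi)) \models \exists v(\id{aff}(\varphi))$ and $\exists v(\id{aff}(\varphi)) \models \id{aff}(\exists v(\varphi))$ separately, exploiting the fact that $\id{aff}$ is an upper closure operator (so it is monotone, inflationary and idempotent) and the variable-elimination results just established, namely Proposition~\ref{prop:kill-var}. Throughout I will use that $\exists v$ is itself monotone and inflationary, that $\varphi \models \id{aff}(\varphi)$, and that $\id{aff}(\psi)$ is affine (hence $\exists v(\id{aff}(\psi))$ deserves scrutiny: note $\exists v$ of an affine function is again affine, since eliminating a variable from a conjunction of affine equations — via Gaussian substitution — yields a conjunction of affine equations; this small fact may be worth stating explicitly or citing the syntactic characterisation).

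For the easy direction, $\exists v(\id{aff}(\varphi)) \models \id{aff}(\exists v(\varphi))$: since $\varphi \models \exists v(\varphi)$, monotonicity of $\id{aff}$ gives $\id{aff}(\varphi) \models \id{aff}(\exists v(\varphi))$, and then monotonicity of $\exists v$ gives $\exists v(\id{aff}(\varphi)) \models \exists v(\id{aff}(\exists v(\varphi)))$. It remains to see that $\exists v(\id{aff}(\exists v(\varphi))) = \id{aff}(\exists v(\varphi))$, i.e.\ that $\id{aff}(\exists v(\varphi))$ is already independent of $v$. This is exactly where Proposition~\ref{prop:kill-var} applies: $\exists v(\varphi)$, if it has any model $\mu$ at all, has both $\mu$ and $\mu \bxor \chi_v$ as models, so $\id{aff}(\exists v(\varphi)) = \exists v(\id{aff}(\exists v(\varphi)))$; the degenerate case where $\exists v(\varphi) = \false$ (equivalently $\varphi = \false$) is handled trivially since then both sides of the proposition are $\false$.

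For the other direction, $\id{aff}(\exists v(\varphi)) \models \exists v(\id{aff}(\varphi))$: since $\varphi \models \id{aff}(\varphi)$, monotonicity of $\exists v$ gives $\exists v(\varphi) \models \exists v(\id{aff}(\varphi))$. Now apply $\id{aff}$ to both sides: $\id{aff}(\exists v(\varphi)) \models \id{aff}(\exists v(\id{aff}(\varphi)))$. If I can argue that $\exists v(\id{aff}(\varphi))$ is itself affine, then $\id{aff}$ is the identity on it and the right-hand side collapses to $\exists v(\id{aff}(\varphi))$, completing the inclusion. So the crux reduces to the lemma that the existential quantification of an affine function is affine — which follows from the syntactic characterisation (a conjunction of modulo-2 equations), or alternatively from Proposition~\ref{prop:translate} together with the observation that projecting a vector space along one coordinate yields a vector space.

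The main obstacle is precisely this last claim — that $\exists v$ preserves affineness — together with the book-keeping for the empty-model (unsatisfiable) case, where one must check the equalities hold when $\varphi$ or $\exists v(\varphi)$ has no models. Neither is deep, but the proof is not purely formal closure-operator manipulation: it genuinely needs one structural fact about affine functions beyond what a generic upper closure operator provides. I would state that fact as a short preliminary observation (one line, appealing to the syntactic form) and then present the two inclusions as above.
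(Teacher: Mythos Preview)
Your argument is correct, but it follows a different route from the paper's. The paper argues model-theoretically: it uses (implicitly) that the models of $\id{aff}(\psi)$ are exactly the odd-length $\bxor$-sums of models of $\psi$, then for each direction takes such a sum, partitions or translates the summands by $\chi_v$ as needed, and tracks parity to land in the right place. By contrast, you work order-theoretically, using only that $\id{aff}$ and $\exists v$ are monotone and inflationary, together with two structural facts about the affine class: Proposition~\ref{prop:kill-var} (to show $\id{aff}(\exists v(\varphi))$ is independent of $v$) and the lemma that $\exists v$ preserves affineness (to collapse $\id{aff}(\exists v(\id{aff}(\varphi)))$). Your approach is cleaner and makes explicit exactly which property of the affine class is doing the work---indeed, it shows the result holds for any upper closure operator onto a class closed under existential quantification---whereas the paper's proof is more self-contained in that it does not isolate or need the ``$\exists v$ of affine is affine'' lemma, instead doing the parity bookkeeping by hand. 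Your justification of that lemma (Gaussian elimination on the syntactic form, or projection of a vector space) is fine; just make sure to state it as a separate observation so the reader sees where the affine-specific content enters.
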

\begin{proof}
We need to show that the models of $\id{aff}(\exists v (\varphi))$
are exactly the models of $\exists v (\id{aff}(\varphi))$.
Clearly $\id{aff}(\exists v (\varphi))$ is $\false$ iff
$\varphi$ is $\false$ iff $\exists v (\id{aff}(\varphi))$ is $\false$.
So we can assume that $\id{aff}(\exists v (\varphi))$ is
satisfiable---let $\mu \models \id{aff}(\exists v (\varphi))$.
Then, for some positive odd number  $k$, 
\[
  \mu = \mu_1 \bxor \mu_2 \bxor \cdots \bxor \mu_k
\]
with $\mu_1, \ldots, \mu_k$ being different models of $\exists v (\varphi)$.
These $k$ models can be partitioned into two sets, according as they
satisfy $\varphi$; let
\[
  M = \{ \mu_i \mid 1 \leq i \leq k, \mu_i \models \varphi \} \qquad
  M' = \{ \mu_i \mid 1 \leq i \leq k, \mu_i \not\models \varphi \}
\]
Then both $M$ and $M'_{\chi_v}$ consist entirely of models of $\varphi$.
Hence, depending on the parity of $M$'s cardinality, either $\mu$ or
$\mu \bxor \chi_v$ is a model of $\id{aff}(\varphi)$ (or both are).
In either case, $\mu \models \exists v (\id{aff}(\varphi))$.

Conversely, let $\mu \models \exists v (\id{aff}(\varphi))$.
Then either $\mu$ or $\mu \bxor \chi_v$ is a model of $\id{aff}(\varphi)$
(or both are).
Hence $\mu$ (or $\mu \bxor \chi_v$ as the case may be) can be
written as a sum of $k$ models $\mu_1, \ldots, \mu_k$ ($k$ odd)
of $\varphi$.
It follows that both $\mu_1 \bxor \mu_2 \bxor \cdots \bxor \mu_k$
and $\mu_1 \bxor \mu_2 \bxor \cdots \bxor \mu_k \bxor \chi_v$
are models of $\exists v (\varphi)$.
Hence $\mu \models \id{aff}(\exists v (\varphi))$.
\ebox
\end{proof}

\subsection{ROBDDs}
\label{sec:robdds}

We briefly recall the essentials of ROBDDs~\cite{Bryant:CS92}.
Let the set $\Var$ of propositional variables be equipped with a
total ordering $\prec$.
\defined{Binary decision diagrams} (\defined{BDDs}) are defined
inductively as follows:
\begin{itemize}
\item
\bddfalse is a BDD.
\item
\bddtrue is a BDD.
\item
If $x \in \Var$ and $R_1$ and $R_2$ are BDDs then
$\ite{x}{R_1}{R_2}$ is a BDD.
\end{itemize}
Let $R = \ite{x}{R_1}{R_2}$.
We say a BDD $R'$ \defined{appears in} $R$ iff $R' = R$ or $R'$
appears in $R_1$ or $R_2$.  We define
$\vars{R} = \{v \mid \ite{v}{\_}{\_} \mbox{ appears in } R\}$.
The meaning of a BDD is given as follows.
\[ \begin{array}{ll}
    \doubleb{\bddfalse}
        &= \false
 \\ \doubleb{\bddtrue}
        &= \true
 \\ \doubleb{\ite{x}{R_1}{R_2}}
        &= (x \land \doubleb{R_1}) \lor (\lnot{x} \land \doubleb{R_2})
\end{array} \]
A BDD is an \defined{Ordered binary decision diagram} (\defined{OBDD}) iff it 
is \bddfalse or \bddtrue or if it is
\ite{x}{R_1}{R_2}, $R_1$ and $R_2$ are OBDDs, and $\forall x' \in
\vars{R_1} \cup \vars{R_2}: x \prec x'$.

An OBDD $R$ is a \defined{Reduced Ordered Binary Decision Diagram} 
(\defined{ROBDD} \cite{bry-toc-86,Bryant:CS92})
iff for all BDDs $R_1$ and $R_2$ appearing in $R$, $R_1 = R_2$ when
$\doubleb{R_1} = \doubleb{R_2}$.
Practical implementations \cite{bra-rud-dac-90} use a function
\mknd{x}{R_1}{R_2} to create all ROBDD nodes as follows:
\begin{enumerate}
\item If $R_1 = R_2$, return $R_1$ instead of a new node, as
  $\doubleb{\ite{x}{R_1}{R_2}} = \doubleb{R_1}$.
\item If an identical ROBDD was previously built, return that one instead of
  a new one; this is accomplished by keeping a hash table, called
  the \emph{unique table}, of all previously created nodes.
\item Otherwise, return \ite{x}{R_1}{R_2}.
\end{enumerate}
This ensures that ROBDDs are strongly canonical: a shallow equality test is
sufficient to determine whether two ROBDDs represent the same Boolean
function.

Figure~\ref{fig:robdd2} shows some example ROBDDs.
The ROBDD in Figure~\ref{fig:robdd2}(a) denotes the function which has
five models:
$\{00011, 00110, 01001, 01101, 10101\}$.
In general we depict the ROBDD $\ite{x}{R_1}{R_2}$ as a directed
acyclic graph rooted in $x$,
with a solid arc from $x$ to the dag for $R_1$ and a dashed line
from $x$ to the dag for $R_2$.
However, to avoid unnecessary clutter, we omit the node (sink)
for $\false$ and all arcs leading to that sink.

As a typical example of an ROBDD algorithm, Algorithm~\ref{alg:bdd_or} 
generates the disjunction of two given ROBDDs.
This operation will be used by the affine approximation
algorithm presented in Section~\ref{sec:algorithms}.

\begin{algorithm}[t]
\hspace*{2em} \bddor{\bddtrue}{\_} = \bddtrue \\
\hspace*{2em} \bddor{\bddfalse}{\_} = \bddfalse \\
\hspace*{2em} \bddor{\_}{\bddtrue} = \bddtrue \\
\hspace*{2em} \bddor{\_}{\bddfalse} = \bddfalse \\
\hspace*{2em} \bddor{\ite{x}{T}{E}}{\ite{x'}{T'}{E'}} \\
\hspace*{4em} $ \mid x \prec x' = 
\mknd{x}{\bddor{T}{\ite{x'}{T'}{E'}}}{\bddor{E}{\ite{x'}{T'}{E'}}}$ \\
\hspace*{4em} $ \mid x' \prec x = 
\mknd{x'}{\bddor{\ite{x}{T}{E}}{T'}}{\bddor{\ite{x}{T}{E}}{E'}}$ \\
\hspace*{4em} $ \mid \mathbf{otherwise} = 
\mknd{x}{\bddor{T}{T'}}{\bddor{E}{E'}}$ \\
\caption{The ``or'' operator for ROBDDs\label{alg:bdd_or}}
\end{algorithm}

Algorithm~\ref{alg:get_model} is used to extract a model from an ROBDD.  
For an unsatisfiable ROBDD (that is, $\bddfalse$) we return $\bot$.
Although presented here in recursive fashion, it is better implemented in an 
iterative manner whereby we traverse through the ROBDD, one pointer moving down 
the ``else'' branch at each node, a second pointer trailing immediately behind.
If a \bddtrue sink is found, we return the 
path traversed thus far and note that any further variables which we are yet to 
encounter may be assigned any value. If a \bddfalse sink is found, we use the
trailing pointer to
step up a level, follow the ``then'' branch for one step and continue searching 
for a model by following ``else'' branches. This method relies on the fact 
that ROBDDs are ``reduced'', so that if no $\bddtrue$ sink can be 
reached from a node, then the node itself is the $\bddfalse$ sink.

\begin{algorithm}[t]
\hspace*{2em} $\getmodel{\bddfalse} = \bot$ \\
\hspace*{2em} $\getmodel{\bddtrue} = \lambda v. \bot $ \\
\hspace*{2em} $\getmodel{\ite{x}{T}{E}} = $ \\
\hspace*{4em} $ \mathbf{let}$ $\mu$ = \getmodel{T} \\
\hspace*{4em} $ \mathbf{in} $ \\
\hspace*{6em} $ \mathbf{if} $ $\mu$ = $\bot$ $\mathbf{then} $ \\
\hspace*{8em} $ \getmodel{E}[x \mapsto 0] $ \\
\hspace*{6em} $ \mathbf{else}$ $\mu[x \mapsto 1] $ \\
\caption{get\_model algorithm for ROBDDs\label{alg:get_model}}
\end{algorithm}

We shall later use the following obvious corollary of Proposition~\ref{prop:vs}:
\begin{corollary} \rm
\label{cor:vs_robbd}
Let ROBDD $R$ represent a function whose set of models form a vector
space.
Then every path from $R$'s root node to the $\true$-sink contains the
same sequence of variables, namely $\vars{R}$ listed in variable order.
\ebox
\end{corollary}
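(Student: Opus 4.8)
The plan is to show that every root-to-$\true$-sink path must \emph{mention} every variable in $\vars{R}$; since the OBDD property already forces the variables along any path to appear in strictly increasing $\prec$-order, this at once gives that all such paths list precisely $\vars{R}$ in variable order, which is the assertion. (Note the hypothesis rules out $R = \bddfalse$, whose empty model set is not a vector space, while the case $R = \bddtrue$ is vacuous.)

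So let $\phi = \doubleb{R}$, with model set $M$ a vector space, and suppose for contradiction that some path $p$ from the root of $R$ to the $\true$-sink omits a variable $v \in \vars{R}$. Reading off, at each node of $p$, which branch is taken yields a partial valuation — consistent, since variables strictly increase along $p$. Extend it to a total valuation $\mu$ by sending $v$, and every other variable not on $p$, to $\false$. Then following $p$ witnesses $\mu \models \phi$. Since $v$ labels no node of $p$, the companion valuation $\mu \bxor \chi_v$, which differs from $\mu$ only at $v$, induces the very same walk down $R$, so $\mu \bxor \chi_v \models \phi$ as well. The hypotheses of Proposition~\ref{prop:vs} are thus met — $M$ is a vector space and both $\mu$ and $\mu \bxor \chi_v$ satisfy $\phi$ — so $\phi$ is independent of $v$.

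To finish, I would contradict this using the standard property of reduced OBDDs that a variable labels some node of an ROBDD only if the represented function genuinely depends on it. Concretely: if $\ite{v}{A}{B}$ occurs in $R$, then $A$ and $B$ cannot be syntactically equal (a node properly contains each of its children), so reducedness forces $\doubleb{A} \neq \doubleb{B}$; moreover this node is reached from the root by a definite sequence of branch choices, so $\phi$ cofactored by the corresponding partial valuation coincides with the node's function, which — as $A$ and $B$ are themselves independent of $v$ but have different meanings — depends on $v$. Hence $\phi$ depends on $v$, contradicting independence, and the assumed path $p$ cannot exist.

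The argument is essentially just Proposition~\ref{prop:vs} plus this last reduced-OBDD fact, so the only real work is deciding how much ROBDD bookkeeping to spell out. The main obstacle is making the two structural steps — ``a path omitting $v$ is followed by both $\mu$ and $\mu \bxor \chi_v$'' and ``a $v$-labelled node forces $\doubleb{R}$ to depend on $v$'' — fully rigorous; both are routine inductions on BDD structure relating root-to-node paths with cofactors of $\doubleb{R}$, and I would state them crisply rather than belabour them.
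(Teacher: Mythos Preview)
Your argument is correct and is precisely the intended one: the paper states the corollary without proof, labelling it an ``obvious corollary of Proposition~\ref{prop:vs}'', and your proof is exactly the natural unpacking of that remark---a skipped variable on a $\true$-path yields two models differing only in that variable, Proposition~\ref{prop:vs} then gives independence, and reducedness supplies the contradiction. There is nothing to add.
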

It is important to take advantage of \emph{fan-in} to create efficient ROBDD
algorithms.  
Often some ROBDD nodes will appear multiple times in a given ROBDD, 
and algorithms that traverse that ROBDD will meet these nodes multiple times.
Many algorithms can avoid repeated work by keeping a cache
of previously seen inputs and their corresponding outputs, called a
\emph{computed table}, see Brace \etal~\cite{bra-rud-dac-90} for details.

\section{Finding Affine Envelopes for ROBDDs}
\label{sec:algorithms}

Zanuttini~\cite{Zan:ECAI02} gives an algorithm, here presented as 
Algorithm~\ref{alg:Zan}, for finding the affine envelope, assuming a 
Boolean function $\varphi$ is represented as a set of models.
This algorithm is justified by Proposition~\ref{prop:translate}.

\begin{algorithm}[t]
\begin{algorithmic}
\STATE {\bf Input:} The set $M$ of models for function $\varphi$. 
\STATE {\bf Output:} $\aff{M}$ --- the set of models of $\varphi$'s affine envelope.
\\
\IF{$M = \emptyset$} 
\STATE {\bf return} $M$ 
\ENDIF
\STATE $N \leftarrow \emptyset$ 
\STATE choose $\mu \in M$ 
\STATE $\id{New} \leftarrow M_\mu$ 
\REPEAT
\STATE $N \leftarrow N \cup \id{New}$ 
\STATE $\id{New} \leftarrow \{ \mu_1 \bxor \mu_2 \mid \mu_1, \mu_2 \in N\} \setminus N$
\UNTIL{$\id{New} = \emptyset$}
\STATE {\bf return} $N_\mu$
\end{algorithmic}
\caption{The sets-of-models based affine envelope algorithm\label{alg:Zan}}
\end{algorithm}
\begin{example}
To see Algorithm~\ref{alg:Zan} in action,
assume that $\varphi$ has four models, 
$M = \{01011, 01100, 10111, 11001\}$, and refer to Figure~\ref{fig:ZanAlg}.
\begin{figure}[t]
\[
  \begin{array}{c}
    M = \left\{ \begin{array}{c}
        01011 \\
        01100 \\
        10111 \\
        11001
        \end{array} \right\} \\~\\~\\~\\
    \mu = 01100
  \end{array}
\quad
  \begin{array}{c}
  M_\mu = \left\{ \begin{array}{c}
        00111 \\
        00000 \\
        11011 \\
        10101
        \end{array} \right\}\\~\\~\\~\\~\\
  \end{array}
\quad
  N     = \left\{ \begin{array}{c}
        00111 \\
        00000 \\
        11011 \\
        10101 \\
        11100 \\
        10010 \\
        01110 \\
        01001
        \end{array} \right\}
\quad
    N_\mu = \aff{M} = \left\{ \begin{array}{c}
        01011 \\
        01100 \\
        10111 \\
        11001 \\
        10000 \\
        11110 \\
        00010 \\
        00101
        \end{array} \right\}
\]
\caption{Steps in Algorithm~\ref{alg:Zan}\label{fig:ZanAlg}}
\end{figure}
We randomly pick $\mu = 01100$ and obtain $M_\mu$ as shown.
The first round of completion under `$\bxor$' adds three bit-strings:
$\{11100, 10010, 01110\}$,
and another round adds $01001$ to produce $N$.
Finally, ``adding back'' $\mu = 01100$ yields the affine
envelope $N_\mu = \aff{M}$.
\ebox
\end{example}

We are interested in developing an algorithm for ROBDDs.
We can improve on Algorithm~\ref{alg:Zan} and at the same 
time make it more suitable for ROBDD manipulation.
The idea is to build the result $N$ step by step, by picking the models $\nu$ 
of $M_\mu$ one at a time and computing $N := N \cup N_\nu$
at each step.
We can start from $N = \{ \zeroval \}$, as $\zeroval$ has to be in $M_\mu$.
This leads to Algorithm~\ref{alg:Zan2}.

\begin{algorithm}[t]
\begin{algorithmic}
\STATE {\bf Input:} The set $M$ of models for function $\varphi$. 
\STATE {\bf Output:} $\aff{M}$ --- the set of models of $\varphi$'s affine envelope. 
\\
\IF{$M =\emptyset$} 
\STATE {\bf return} $M$ 
\ENDIF
\STATE $N \leftarrow \{ \zeroval \}$
\STATE {\bf choose} $\mu \in M$
\STATE $R \leftarrow M_\mu \setminus \{ \zeroval \}$ 
\FORALL{$\nu \in R$}
\STATE $N \leftarrow N \cup N_{\nu}$
\ENDFOR
\STATE {\bf return} $N_\mu$
\end{algorithmic}
\caption{A variant of Algorithm~\ref{alg:Zan}\label{alg:Zan2}}
\end{algorithm}
This formulation is well suited to ROBDDs, as the operation $N_{\nu}$,
that is, taking the xor of a model $\nu$ with each model of the \emph{ROBDD} 
$N$ can be implemented by traversing $N$ and, for each $v$-node with $\nu(v) = 
\true$, swapping that node's children.
And we can do better, utilising two observations.

First, during its construction, there is no need to traverse the ROBDD $N$
for each individual model $\nu$.
A full traversal of $N$ will find all its models systematically,
eliminating a need to remove them one by one.

Second, the ROBDD being constructed can be simplified aggressively
during its construction, by utilising Propositions \ref{prop:kill-var} and
\ref{prop:aff-exists-commute}.
Namely, as we traverse ROBDD $R$ systematically, many paths from the root to the
$\true$-sink will be found that do not contain every variable in $\vars{R}$.
Each such path corresponds to a model \emph{set} of cardinality $2^k$,
$k$ being the number of ``skipped'' variables.
Proposition~\ref{prop:kill-var} tells us that, eventually, the affine envelope
will be independent of all such ``skipped'' variables, and
Proposition~\ref{prop:aff-exists-commute} guarantees that variable
elimination can be interspersed arbitrarily with the process of 
``xoring'' models, that is, we can eliminate variables aggressively.

This leads to Algorithm~\ref{alg:to_aff}.
\begin{algorithm}[t]
\rm
{\bf Input:} An ROBDD $R$. \\
{\bf Output:} The affine envelope of $R$. \\
\\
\hspace*{2em} \toaff{\bddfalse} = \bddfalse \\
\hspace*{2em} \toaff{R} = {\bf let} $\mu = \getmodel{R}$ {\bf in}
		\translate{\xorclose{\translate{R}{\mu}}}{\mu} \\
\\
\hspace*{2em} \translate{\bddfalse}{\_} = \bddfalse \\
\hspace*{2em} \translate{\bddtrue}{\_} = \bddtrue \\
\hspace*{2em} \translate{\ite{x}{T}{E}}{\mu} \\
\hspace*{4em} $\mid (\mu(x) = 0) = 
\cons{x}{\translate{T}{\mu}}{\translate{E}{\mu}}{\mu}$ \\
\hspace*{4em} $\mid (\mu(x) = 1) = 
\cons{x}{\translate{E}{\mu}}{\translate{T}{\mu}}{\mu}$ \\
\\
\hspace*{2em} \xorclose{R} = \trav{R}{\lambda v.\bot}{\bigwedge \{\bar{v} \mid 
v \in \vars{R}\}} \\
\\
\hspace*{2em} $\trav{\bddfalse}{\_}{S} = S$ \\
\hspace*{2em} $\trav{\bddtrue}{\mu}{S}$ \\
\hspace*{4em} $ \mid (\mu \models S) = S$ \\
\hspace*{4em} $ \mid \mathbf{otherwise} = \extend{S}{S}{\mu}$ \\
\hspace*{2em} $\trav{\ite{x}{T}{E}}{\mu}{S} = \trav{T}{\mu[x \mapsto 
1]}{\trav{E}{\mu[x \mapsto 0]}{S}}$ \\
\\
\hspace*{2em} \cons{x}{T}{E}{\mu} \\
\hspace*{4em} $\mid (\mu(x) = \bot) = \bddor{T}{E}$ \\
\hspace*{4em} $\mid \mathbf{otherwise} = \mknd{x}{T}{E}$ \\
\\
\hspace*{2em} \extend{\bddtrue}{\_}{\_} = \bddtrue \\
\hspace*{2em} \extend{\_}{\bddtrue}{\_} = \bddtrue \\
\hspace*{2em} \extend{\bddfalse}{S}{\mu} = \translate{S}{\mu} \\
\hspace*{2em} \extend{\ite{x}{T}{E}}{\bddfalse}{\mu} = 
\cons{x}{\extend{T}{\bddfalse}{\mu}}{\extend{E}{\bddfalse}{\mu}}{\mu} \\
\hspace*{2em} \extend{\ite{x}{T}{E}}{\ite{x}{T'}{E'}}{\mu} \\
\hspace*{4em} $\mid (\mu(x) = 1)$ = 
\mknd{x}{\extend{T}{E'}{\mu}}{\extend{E}{T'}{\mu}} \\
\hspace*{4em} $\mid \mathbf{otherwise} $ = 
\cons{x}{\extend{T}{T'}{\mu}}{\extend{E}{E'}{\mu}}{\mu} \\
\caption{Affine envelopes for ROBDDs\label{alg:to_aff}}
\end{algorithm}
The algorithm combines several operations in an 
effort to amortise their cost. 
We present it in Haskell style, using pattern matching and guarded
equations.
In what follows we step through the details of the algorithm.

The \toaffname function finds an initial model
$\mu$ of $R$, before translating $R$, through the call to \translatename.  
This initial call to \translatename has the effect of ``xor-ing'' $\mu$ with 
all of the models of $R$. Once translated, the xor closure is taken, before 
translating again using the initial model $\mu$ to obtain the affine closure.

\translatename is responsible for computing the xor of a model with an 
ROBDD. Its operation relies on the observation that for a given node $v$ in the 
ROBDD, if $\mu(v) = 1$, then the operation is equivalent to exchanging the 
``then'' and ``else'' branches of $v$.

\xorclosename is used to compute the xor-closure of an ROBDD $R$. 
The third argument passed to \travname is an accumulator in which the 
result is constructed.
As in Algorithm~\ref{alg:Zan2}, we know that $\zeroval$ will be a
model of the result, so we initialise the accumulator as
(the ROBDD for) $\bigwedge \{\bar{v} \mid v \in \vars{R}\}$.

\travname implements a recursive traversal of the ROBDD, and when a model is
 found in $\mu$, we ``extend'' the affine envelope to include the newly found model.
Namely, $\extend{R}{S}{\mu}$ produces (the ROBDD for) $R \lor S_\mu$. Note that 
once a model is found during the traversal, \travname checks if $\mu$ is 
already present within the xor-closure, and if it is not, invokes \extendname 
accordingly. This simple check avoids making unnecessary calls to \extendname.

The \consname function represents a special case of \mkndname. It takes 
an additional argument in $\mu$ and uses it to determine whether to restrict 
away the corresponding node being constructed. The correctness of
\consname rests on 
Propositions \ref{prop:kill-var} and \ref{prop:aff-exists-commute},
which guarantee that affine approximation can be interspersed with
variable elimination, so that the latter can be performed aggressively.

Finally, once a model is found during a traversal, \extendname 
is used to build up the affine closure of the ROBDD. 
In the context of the initial call $\extend{S}{S}{\mu}$,
Corollary~\ref{cor:vs_robbd} ensures that the pattern of the last equation for
\extendname is sufficient: If neither argument is a sink, the two
will have the same root variable.

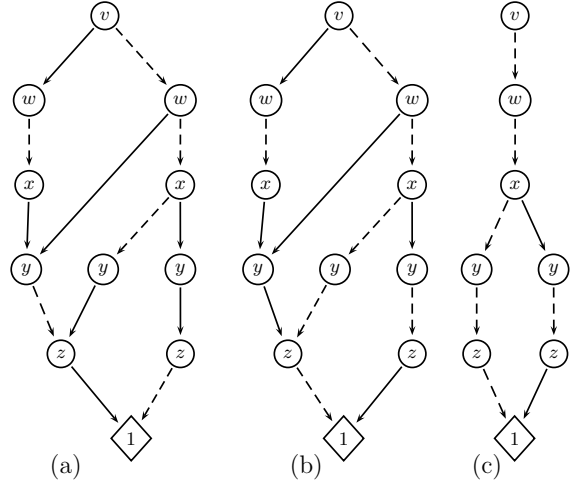
\begin{figure}[t]
\begin{center}
\scalebox{0.8}{
\begin{pspicture}(-2,0)(11,-8)     
\psset{mcol=l,colsep=6pt,nodesep=9pt}
\rput(2,-7.5){\large{(a)}}
\rput(6,-7.5){\large{(b)}}
\rput(9,-7.5){\large{(c)}}

\hfil
\pstree[levelsep=40pt,linestyle=none]{\Tcircle[name=AV,linestyle=solid] {$v$}}{
		\pstree{\Tcircle[name=AW1,linestyle=solid] {$w$}} {
				\pstree{\Tcircle[name=AX1,linestyle=solid] {$x$}} {
						\TR {~~~}
						\pstree{\Tcircle[name=AY1,linestyle=solid] {$y$}} {
							\TR {~~~}
							\pstree{\Tcircle[name=AZ1,linestyle=solid] {$z$}} {
									\TR {~~~}
									\TR {~~~}
									\Tdia [name=ATRUE, linestyle=solid] 
									{$\bddtrue$}
									}
								}
						\pstree{\Tcircle[name=AY2,linestyle=solid] {$y$}} {}
						}
                }
		\pstree{\Tcircle[name=AW2, linestyle=solid] {$w$}}{
				\pstree{\Tcircle[name=AX2,linestyle=solid] {$x$}}{
					\pstree{\Tcircle[name=AY3,linestyle=solid] {$y$}} {
						\pstree{\Tcircle[name=AZ2,linestyle=solid] {$z$}} {}
						}
					}
                }
        }
\psset{nodesepA=1pt,nodesepB=2pt,linecolor=black,arrows=->}
\ncline[linestyle=solid]{AV}{AW1}
\ncline[linestyle=dashed]{AV}{AW2}
\ncline[linestyle=dashed]{AW1}{AX1}
\ncline[linestyle=solid]{AW2}{AY1}
\ncline[linestyle=dashed]{AW2}{AX2}
\ncline[linestyle=solid]{AX1}{AY1}
\ncline[linestyle=dashed]{AX2}{AY2}
\ncline[linestyle=solid]{AX2}{AY3}
\ncline[linestyle=dashed]{AY1}{AZ1}
\ncline[linestyle=solid]{AY2}{AZ1}
\ncline[linestyle=solid]{AY3}{AZ2}
\ncline[linestyle=solid]{AZ1}{ATRUE}
\ncline[linestyle=dashed]{AZ2}{ATRUE}

\hfil
\psset{mcol=l,colsep=6pt,nodesep=9pt}
\pstree[levelsep=40pt,linestyle=none,arrows=-]{\Tcircle[name=BV,linestyle=solid] {$v$}}{
	\pstree{\Tcircle[name=BW1,linestyle=solid] {$w$}} {
		\pstree{\Tcircle[name=BX1,linestyle=solid] {$x$}} {
			\TR {}
			\pstree{\Tcircle[name=BY1,linestyle=solid] {$y$}} {
				\TR {}
				\pstree{\Tcircle[name=BZ1,linestyle=solid] {$z$}} {
					\TR {}
					\TR {}
					\Tdia [name=BTRUE,linestyle=solid] {$\bddtrue$}
					}
				}
			\pstree{\Tcircle[name=BY2,linestyle=solid] {$y$}} {}
			}
		}
	\pstree{\Tcircle[name=BW2, linestyle=solid] {$w$}}{
		\pstree{\Tcircle[name=BX2,linestyle=solid] {$x$}}{
			\pstree{\Tcircle[name=BY3,linestyle=solid] {$y$}} {
				\pstree{\Tcircle[name=BZ2,linestyle=solid] {$z$}} {}
				}
			}
		}
	}
\psset{nodesepA=1pt,nodesepB=2pt,linecolor=black,arrows=->}
\ncline[linestyle=solid]{BV}{BW1}
\ncline[linestyle=dashed]{BV}{BW2}
\ncline[linestyle=dashed]{BW1}{BX1}
\ncline[linestyle=solid]{BW2}{BY1}
\ncline[linestyle=dashed]{BW2}{BX2}
\ncline[linestyle=solid]{BX1}{BY1}
\ncline[linestyle=dashed]{BX2}{BY2}
\ncline[linestyle=solid]{BX2}{BY3}
\ncline[linestyle=solid]{BY1}{BZ1}
\ncline[linestyle=dashed]{BY2}{BZ1}
\ncline[linestyle=dashed]{BY3}{BZ2}
\ncline[linestyle=dashed]{BZ1}{BTRUE}
\ncline[linestyle=solid]{BZ2}{BTRUE}

\hfil
\pstree[levelsep=40pt,linestyle=none,arrows=-]{\Tcircle[name=CV,linestyle=solid] {$v$}}{
        \pstree{\Tcircle[name=CW,linestyle=solid] {$w$}} {
                \pstree{\Tcircle[name=CX,linestyle=solid] {$x$}} {
                        \pstree{\Tcircle[name=CY1,linestyle=solid] {$y$}} {
                                \pstree{\Tcircle[name=CZ1,linestyle=solid] {$z$}} {
                                        \TR {~~~}
										\Tdia [name=CTRUE,linestyle=solid] 
										{$\bddtrue$}
                                        }
                                }
                        \pstree{\Tcircle[name=CY2,linestyle=solid] {$y$}} {
                                \pstree{\Tcircle[name=CZ2,linestyle=solid] {$z$}} {}
                                }
                        }
                }
        }
\psset{nodesepA=1pt,nodesepB=2pt,linecolor=black,arrows=->}
\ncline[linestyle=dashed]{CV}{CW}
\ncline[linestyle=dashed]{CW}{CX}
\ncline[linestyle=dashed]{CX}{CY1}
\ncline[linestyle=solid]{CX}{CY2}
\ncline[linestyle=dashed]{CY1}{CZ1}
\ncline[linestyle=dashed]{CY2}{CZ2}
\ncline[linestyle=dashed]{CZ1}{CTRUE}
\ncline[linestyle=solid]{CZ2}{CTRUE}
\hfil
\end{pspicture}
}
\end{center}
\caption{(a): An example ROBDD $R$; 
note that all our ROBDD diagrams leave out the $\false$-sink and all arcs to it.
(b): The translated version $R_\mu$.  
(c): The vector space $S$ that has been extended to cover $00101$.\label{fig:robdd2}}
\end{figure}

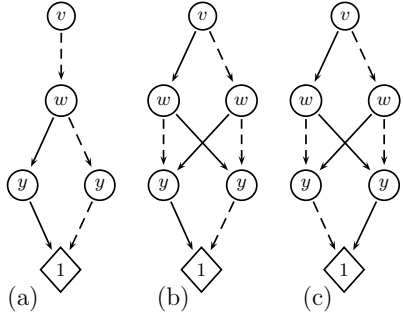
\begin{figure}[t]
\scalebox{0.8}{
\begin{pspicture}(-4,1)(9,-5)  
\rput(0.8,-4.7){\large{(a)}}
\rput(3.3,-4.7){\large{(b)}}
\rput(5.7,-4.7){\large{(c)}}

\psset{mcol=l,colsep=6pt,nodesep=9pt}
\pstree[levelsep=40pt,linestyle=none,arrows=-]{\Tcircle[name=DV,linestyle=solid] {$v$}}{
        \pstree{\Tcircle[name=DW,linestyle=solid] {$w$}} {
                \pstree{\Tcircle[name=DY1,linestyle=solid] {$y$}} {
                        \TR {~~~}
						\Tdia [name=DTRUE,linestyle=solid] {$\bddtrue$}
                        }
                \pstree{\Tcircle[name=DY2,linestyle=solid] {$y$}} {}
                }
        }
\psset{nodesepA=1pt,nodesepB=2pt,linecolor=black,arrows=->}
\ncline[linestyle=dashed]{DV}{DW}
\ncline[linestyle=solid]{DW}{DY1}
\ncline[linestyle=dashed]{DW}{DY2}
\ncline[linestyle=solid]{DY1}{DTRUE}
\ncline[linestyle=dashed]{DY2}{DTRUE}

\pstree[levelsep=40pt,linestyle=none,arrows=-]{\Tcircle[name=EV,linestyle=solid] {$v$}}{
		\pstree{\Tcircle[name=EW1,linestyle=solid] {$w$}} {
				\pstree{\Tcircle[name=EY1,linestyle=solid] {$y$}} {
                        \TR {~~~}
						\Tdia [name=ETRUE,linestyle=solid] {$\bddtrue$}
                        }
				}
		\pstree{\Tcircle[name=EW2,linestyle=solid] {$w$}} {
				\pstree{\Tcircle[name=EY2,linestyle=solid] {$y$}} {}
				}
        }
\ncline[linestyle=solid]{EV}{EW1}
\ncline[linestyle=dashed]{EV}{EW2}
\ncline[linestyle=dashed]{EW1}{EY1}
\ncline[linestyle=solid]{EW1}{EY2}
\ncline[linestyle=solid]{EW2}{EY1}
\ncline[linestyle=dashed]{EW2}{EY2}
\ncline[linestyle=solid]{EY1}{ETRUE}
\ncline[linestyle=dashed]{EY2}{ETRUE}

\pstree[levelsep=40pt,linestyle=none,arrows=-]{\Tcircle[name=FV,linestyle=solid] {$v$}}{
		\pstree{\Tcircle[name=FW1,linestyle=solid] {$w$}} {
				\pstree{\Tcircle[name=FY1,linestyle=solid] {$y$}} {
                        \TR {~~~}
						\Tdia [name=FTRUE,linestyle=solid] {$\bddtrue$}
                        }
				}
		\pstree{\Tcircle[name=FW2,linestyle=solid] {$w$}} {
				\pstree{\Tcircle[name=FY2,linestyle=solid] {$y$}} {}
				}
        }
\ncline[linestyle=solid]{FV}{FW1}
\ncline[linestyle=dashed]{FV}{FW2}
\ncline[linestyle=dashed]{FW1}{FY1}
\ncline[linestyle=solid]{FW1}{FY2}
\ncline[linestyle=solid]{FW2}{FY1}
\ncline[linestyle=dashed]{FW2}{FY2}
\ncline[linestyle=dashed]{FY1}{FTRUE}
\ncline[linestyle=solid]{FY2}{FTRUE}

\end{pspicture}
}
\caption{(a): The vector space $S$ after being extended to cover 
$01$\footnotesize{X}\normalsize$10$. (b): $S$ after extending to cover $10110$.  
(c): $S$ translated to give the affine closure of $R$.\label{fig:robdd3}}

\end{figure}

\begin{example}
Consider the ROBDD $R$ (shown again in Figure~\ref{fig:robdd2}(a)), 
whose set of models is $\{00011,00110,01001,01101,10101\}$.
Picking $\mu = 00011$ and translating gives $R_\mu$, shown in
Figure~\ref{fig:robdd2}(b).
This ROBDD represents a set of vectors $\{00000,00101,01010,01110,10110\}$
which is to be extended to a vector space.

The algorithm now builds up $S$, the xor-closure of $R_\mu$, by taking one 
vector $v$ at a time from $R_\mu$ and extending $S$ to a vector space that 
includes $v$.  $S$ begins as the zero vector.

The first step of the algorithm just adds $00101$ to the existing zero vector
(Figure~\ref{fig:robdd2}(c)).
The next step comes across the vector $01X10$ (which actually represents two 
valuations) and existentially quantifies away the variable $x$
(Figure~\ref{fig:robdd3}(a)).  
Note that the variable $z$ also disappears: this is due to the extension 
required to include $01X10$ that adds enough valuations such that $z$ is 
``covered'' by the vector space.

Extending to cover $10110$ 
simply requires every model to be copied, with $v$ mapped to \true
(Figure~\ref{fig:robdd3}(b)).
Finally, translating back by $\mu$ produces $A$, the affine closure of $R$,
shown in Figure~\ref{fig:robdd3}(c).
\ebox
\end{example}

\section{Experimental Evaluation}
\label{sec:eval}

To evaluate Algorithms~\ref{alg:Zan} and \ref{alg:to_aff} we
generated random Boolean functions using Algorithm~\ref{alg:rand}. 
We generated random Boolean functions of $n$ variables, with an additional
parameter to control the density of the generated function,
that is, to set the likelihood of a random valuation being a model.
For Algorithm~\ref{alg:Zan} we extracted models from the generated
ROBDDs, so that both algorithms were tested on identical Boolean
functions.

\begin{algorithm}[t]
\rm
{\bf Input:} The number $n$ of variables in the random function, \\
\hphantom{{\bf Output:} } $pr$ a calibrator set so that the probability \\
\hphantom{{\bf Output:} } of a valuation being a model is $2^{-pr}$. \\
{\bf Output:} A random Boolean function represented as an ROBDD. \\
\\
\hspace*{2em} \randbdd{n}{pr} = \makerandbdd{0}{n - 1}{pr} \\
\\
\hspace*{2em} $\makerandbdd{m}{n}{pr}$ \\
\hspace*{4em} $\mid (m = n) = \mknd{m}{\randsink}{\randsink}$ \\
\hspace*{4em} $\mid \mathbf{otherwise} = \mknd{m}{T}{E}$ \\
\hspace*{6em} $\mathbf{where}$ \\
\hspace*{8em} $T$ = {\bf if} $(m > n - pr) \land \cointoss$ {\bf then} $\makerandbdd{m + 1}{n}{pr}$ {\bf else} $\bddfalse$ \\
\hspace*{8em} $E$ = {\bf if} $(m > n - pr) \land \cointoss$ {\bf then} $\makerandbdd{m + 1}{n}{pr}$ {\bf else} $\bddfalse$ \\
\\
\hspace*{2em} \randsink = {\bf if} $\cointoss$ {\bf then} $\bddfalse$ {\bf else} $\bddtrue$ \\
\\
\hspace*{2em} $\cointoss$ returns $\true$ or $\false$ with equal probability.
\caption{Generation of random Boolean functions as ROBDDs\label{alg:rand}}
\end{algorithm}

\randbdd{n}{pr} builds, as an ROBDD $R$, a random Boolean function with the 
property that the likelihood of an arbitrary valuation satisfying $R$ is
$2^{-pr}$.
It invokes \makerandbdd{0}{n - 1}{pr}. 
This recursive algorithm builds a 
ROBDD of $(n - pr)$ variables and at depth $(n - pr)$, a random choice 
is made as to whether to continue generating the random function or to simply 
join the branch with a \bddfalse sink. 
If the choice is to continue, then the algorithm recursively 
applies \makerandbdd{m + 1}{n}{pr} to the branch.

By building a ``complete'' ROBDD of $(n - pr)$ variables, we were able to distribute 
the number of models for a given number of variables. In this way, we were able 
to compare the various algorithms for differing model distributions.

Table~\ref{tbl:aff_results} shows the average time (in 
milliseconds) taken by each of the algorithms over 10,000 repetitions 
with the probability 1/1024 of a valuation being a model.
Timing data were collected on a machine running Solaris 9, with two 
Intel Xeon CPUs running at 2.8GHz and 4GB of memory.
Only one CPU was used and tests were run under minimal load on the system.
Our implementation of Algorithm~\ref{alg:Zan} uses sorted arrays of bitstrings
(so that search for models is logarithmic).
As the number of models grows exponentially with the number of variables,
it is not surprising that memory consumption exceeded available space,
so we were unable to collect timing data for more than 15 variables.

\begin{table}[t]
\begin{center}
\begin{tabular}{|c|c|r|}
\hline
    ~~Variables~~ & ~~Algorithm~\ref{alg:Zan}~~ & ~~Algorithm~\ref{alg:to_aff}~~ 
\\
    \hline
    12 & 0.021 & 0.017~~~~~ \\
    15 & 5.991 & 0.272~~~~~ \\
    18 & --- & 0.407~~~~~ \\
    21 & --- & 1.710~~~~~ \\
    24 & --- & 14.967~~~~~ \\
    \hline
\end{tabular}
\caption{Average time in milliseconds to compute one affine 
envelope\label{tbl:aff_results}}
\end{center}
\end{table}

\section{Conclusion}
\label{sec:conclusion}
Approximation and the generation of envelopes for Boolean formulas is used 
extensively in the querying of knowledge bases. Previous research has focused 
on the use of Horn approximations represented in conjunctive normal form (CNF).
In this paper, following the suggestion of Zanuttini, we instead focused on
the class of affine functions, using an approximation algorithm suggested by
Zanuttini~\cite{Zan:ECAI02}.  Our initial implementation using a naive
sets-of-models (as arrays of bitstrings) representation was disappointing, as
even for functions with very few models, the affine envelope often has very
many models (in fact, the affine envelope of very many functions is \true),
so storing sets of models as an array becomes prohibitive even
for functions over rather few variables.

ROBDDs have proved to be an appropriate representation for many applications
of Boolean functions.  Functions with very many models, as well as very few,
have compact ROBDD representations.  Thus we have developed a new affine
envelope algorithm using ROBDDs.  Our approach is based on the same principle
as Zanuttini's, but takes advantage of some useful characteristics of ROBDDs.
In particular, Propositions~\ref{prop:kill-var} and
\ref{prop:aff-exists-commute} allow us to
project away variables aggressively, often significantly reducing the sizes
of the representations being manipulated earlier than would happen otherwise.

Zanuttini~\cite{Zan:ECAI02} suggests an affine envelope algorithm using
modulo 2 congruence equations as output, and proves a polynomial complexity
bound.  However, we preferred to use ROBDDs.  As a functionally complete
representation for Boolean functions, ROBDDs allow the same representation
for input and output, keeping the algorithms simple.  
For example, the algorithm for evaluating whether one
ROBDD entails another is very straightforward, whereas evaluating whether a
set of congruence equations entails a Boolean function in some other
representation would be more complicated.  It also means that systems
which repeatedly construct an affine approximation, then manipulate it as a
general Boolean function, and then approximate this again, can operate
without having to repeatedly convert between different representations.
Importantly for our purposes, computing envelopes as ROBDDs permits us
to use the same representation for approximation to many different Boolean
classes.

Further research in this area includes implementing Zanuttini's suggested
modulo 2 congruence equations representation and comparing to our ROBDD
implementation.  This also includes evaluating the cost of determining
whether a set of congruence equations entail a given general Boolean
function.  We also will compare affine approximation to approximation to
other classes for information loss to evaluate whether affine functions
really are as suitable for knowledge-base approximations as Horn or other
functions.

\bibliographystyle{plain}
\bibliography{robdd}

\end{document}